\newcommand{\FF}{{\mathbb{F}}}
\newcommand{\seq}{\subseteq}
\newcommand{\ZZ}{\mathbb{Z}}
\newcommand{\CC}{\mathbb{C}}
\newcommand{\av}{\mathbf{a}}
\newcommand{\bv}{\mathbf{b}}
\newcommand{\cv}{\mathbf{c}}
\newcommand{\dv}{\mathbf{d}}
\newcommand{\rv}{\mathbf{r}}
\newcommand{\xv}{\mathbf{x}}
\newcommand{\yv}{\mathbf{y}}
\apptocmd\normalsize{%
    \abovedisplayskip=9pt plus 3pt minus 6pt
    \abovedisplayshortskip=3pt plus 3pt
    \belowdisplayskip=9pt plus 3pt minus 9pt
    \belowdisplayshortskip=6pt plus 3pt minus 4pt
}{}{}
\theoremstyle{theorem}
\newtheorem{theorem}{Theorem}[section]
\newtheorem{lem}[theorem]{Lemma}
\newtheorem{cor}[theorem]{Corollary}
\theoremstyle{definition}
\newtheorem{ex}[theorem]{Example}
\title{Quantum codes constructed from cyclic codes over the ring $\FF_q+v\FF_q+v^2\FF_q+v^3\FF_q+v^4\FF_q$}
\author{Djoko Suprijanto
\footnote{Combinatorial Mathematics Research Group, Faculty of Mathematics and Natural Sciences,
Institut Teknologi Bandung,
Jl. Ganesha 10, Bandung, 40132,
INDONESIA,\hfill{djoko@math.itb.ac.id}}
\hspace{0.1cm}and Hopein Christofen Tang
\footnote{Combinatorial Mathematics Research Group,
Faculty of Mathematics and Natural Sciences,
Institut Teknologi Bandung,
Jl. Ganesha 10, Bandung, 40132,
INDONESIA, \hfill{hopeinct@students.itb.ac.id}}}
\date{}
\begin{document}

\maketitle

\begin{abstract}
In this article, we investigate properties of cyclic codes over a finite non-chain ring
$\FF_q+v\FF_q+v^2\FF_q+v^3\FF_q+v^4\FF_q,$ where $q=p^r,$ $r$ is a positive integer, $p$ is an odd prime,
$4 \mid (p-1),$ and $v^5=v.$  As an application, we construct several quantum error correcting codes over
the finite field $\FF_q.$
\end{abstract}

\section{Introduction}
Study of quantum communication and quantum computation is motivated by the fact that, in a sense, quantum mechanics is more superior to classical mechanics.  Hence, quantum bits or shorty qubits are proposed to studied, "instead of classical bits used in computers working by the rules of classical mechanics", to quote Sari and Siap \cite{SS2016a}.  Theoretically, qubits can store
more information in transition or storage compared to the classical case.  It is because of the superposition state of qubits. Beside its superiority,
"one of the main problems for qubits is the decoherence that destroys in a superposition of qubits" \cite{SS2016a}.  At first, the decoherence makes the quantum communication as well as quantum computation seems to be impossible to realize.  Fortunately, the quantum error-correcting codes (QECC) can handle this problem.

The early solution to this problem was proposed by Shor by introducing a quantum error correcting code that encoded one qubit to a highly entangled state of nine qubits \cite{Shor}.  Calderbank and Shor \cite{CS} introduced a method to construct quantum error-correcting codes from the classical ones. Their method is known later as a Calderbank-Shor-Steane construction or CSS construction, for short.   Later, Steane \cite{Steane} proposed a generalization of Calderbank-Shor-Steane construction.  His method enables
him to obtain many new quantum codes unknown to exist before.

Although the original studies of quantum error-correcting codes were over a binary field, later these studies have been generalized to non-binary fields "with the goal to relate the later codes to the former ones" (see \cite{CSS} for earliest study on it, c.f. \cite{AK}).  Later, many people constructed quantum error correcting codes from cyclic codes over various rings (see, e.g.,\cite{QMG2009}, \cite{Gao2015}, \cite{SS2016a}, \cite{SS2016b}, to mention a few of them).  Qian, Ma, and Guo \cite{QMG2009} constructed quantum error-correcting codes starting from cyclic codes over the ring $\FF_2+u\FF_2,$ with $u^2=0.$  Gao \cite{Gao2015} constructed quantum error-correcting codes from cyclic codes over the ring $\FF_q+v\FF_q+v^2\FF_q+v^3\FF_q.$  Gao's work \cite{Gao2015} can be considered as a certain generalization of the work of Sari and Siap \cite{SS2016a}, where they constructed quantum codes from cyclic codes over the ring $\FF_p+v\FF_p+\cdots+v^{p-1}\FF_p,$ where $v^p=v,$ and $p$ is a prime number.  Sari and Siap \cite{SS2016b} have also considered binary quantum error-correcting codes constructed from the cyclic codes over the ring $\FF_2+u\FF_2+u^2\FF_2+\cdots+u^{s-1}\FF_2,$
where $u^s=0.$

In this paper, continuing the investigation of Gao \cite{Gao2015}, we study the structural aspects of cyclic codes over
the ring $R:=\FF_q+v\FF_q+v^2\FF_q+v^3\FF_q+v^4\FF_q,$ where $q=p^r,$ $r$ is a positive integer, $p$ is an odd prime,
$4 \mid (p-1),$ and
$v^5=v.$  As an application, we construct quantum error-correcting codes from cyclic codes over the ring $R.$

This paper is organized as follows.  In Section 2, we consider the structure of  linear codes as well as cyclic codes over the ring $R.$  We also define a Gray map from $R^n$ to $\FF_q^{5n}$ and derive some related properties.  The construction of quantum error-correcting codes from cyclic codes $R$ is given in Section 3.  We also give several concrete examples of quantum codes over finite fields $\FF_q$ in the end of Section 3.

\section{Linear codes over $R$ and the Gray map}

Let $\FF_q$ be a finite field of order $q,$ where $q=p^r,$ $r$ is a positive integer, $p$ is an odd prime, and $4 \mid (p-1).$
Let $R$ denote the ring $\FF_q+v\FF_q+v^2\FF_q+v^3\FF_q+v^4\FF_q$ where
$v^5=v.$    It is clear that $R$ is isomorphic to the ring $\FF_q[v]/\langle v^5-v\rangle.$
It is also well known (see, e.g., Niederreiter \cite{Nieder})  that $v^{p-1}-1$ has a unique factorization into
linear factors over $\FF_q.$  Moreover, since $4 \mid (p-1),$ we have  $(v^4-1) \mid (v^{p-1}-1)$ which implies
$v^5-v=v(v^4-1)=v(v-1)(v+1)(v-a_1)(v-a_2)=v(v-1)(v+1)(v-a)(v+a),$ where
$a,a_1,a_2 \in \FF_q$ with $ a^2=-1.$

Let $f_1=v,$ $f_2=v-1,$ $f_3=v+1,$ $f_4=v-a,$ $f_5=v+a;$ and for $i \in [1,5]_{\ZZ},$ let $\widehat{f_i}=\frac{v^5-v}{f_i}.$
Then there exist $\alpha_i, \beta_i \in \FF_q[v]$ such that $\alpha_i f_i + \beta_i \widehat{f_i}=1.$
Now, for $i \in [1,5]_{\ZZ},$ let $\eta_i=\beta_i \widehat{f_i}.$ Then we have
\begin{itemize}
\item[(i)] $\eta_1,\eta_2,\eta_3,\eta_4, \eta_5 \in R$ are nonzero idempotents orthogonal in $R.$

\item[(ii)] $\eta_1+\eta_2+\eta_3+\eta_4+ \eta_5=1$ in $R.$
\end{itemize}
It implies that, by the Chinese Reminder Theorem, the ring $R$ can be decomposed as follows:
\begin{equation}\label{crt}
R=R\eta_1 \oplus R\eta_2 \oplus R\eta_3 \oplus R\eta_4 \oplus R\eta_5\\
 =\FF_q \eta_1  \oplus \FF_q \eta_2 \oplus \FF_q \eta_3 \oplus \FF_q \eta_4 \oplus \FF_q \eta_5.
\end{equation}
\begin{ex}
For $q = p = 13,$ the five roots of $(v^5-v)$ are $v_1 = 0,$ $v_2 = 1,$ $v_3 = 12=-1,$ $v_4 = 5$, and $v_5 =8=-5$ and the five idempotents are
\[
\begin{aligned}
\eta_1&=12v^4+1,\\
\eta_2&=10v^4+10v^3+10v^2+10v,\\
\eta_3&=10v^4+3v^3+10v^2+3v,\\
\eta_4&=10v^4+2v^3+3v^2+11v,\\
\eta_5&=10v^4+11v^3+3v^2+2v.
\end{aligned}
\]
\hfill $\diamondsuit$
\end{ex}


By the Equation (\ref{crt}) we know that for any $r \in R$ there exist $b_1,b_2,b_3,b_4,b_5 \in \FF_q$ such that
$r=b_1 \eta_1+b_2 \eta_2+b_3 \eta_3+b_4 \eta_4+b_5 \eta_5.$  Let us define a Gray map $\phi$ from $R$ to $\FF_q^5$ by
\[
\begin{array}{rcl}
r & \longmapsto & (b_1,b_2,b_3,b_4,b_5)
\begin{pmatrix}
6 & 2 & 3 & -6  & 6 \\
6 & 6 & 2 & 3  & -6 \\
-6 & 6 & 6 & 2  & 3 \\
3 & -6 & 6 & 6  & 2 \\
2 & 3 & -6 & 6  & 6
\end{pmatrix},
\end{array}
\]
where $6=1+1+1+1+1+1.$

Next, we define the Gray map $\Phi$ from $R^n$ to $\FF_q^{5n}$ as an extension of a Gray map $\phi$ by
\[
\begin{array}{rcl}
\rv & \longmapsto & (\phi(r_0), \phi(r_1),\ldots, \phi(r_{n-1})),
\end{array}
\]
where $\rv=(r_0,r_1,\ldots,r_{n-1}) \in R^n.$

The Hamming weight of an element $\xv \in \FF_q^{5n},$ denoted by $w_H(\xv),$ is defined as a number of nonzero components of $\xv.$
The Lee weight of the element $r=b_1 \eta_1+b_2 \eta_2+b_3 \eta_3+b_4 \eta_4+b_5 \eta_5 \in R,$ denoted by
$w_L(r),$ is defined by
\[
w_L(r)=w_H(\phi(r)).
\]

We also define the Lee weight of a vector $\rv=(r_0,r_1,\ldots,r_{n-1}) \in R^n,$ naturally, to be the rational sum
of Lee weights of its components, i.e. $w_L(\rv)=\sum_{i=0}^{n-1} w_L(r_i).$ For any vectors
$\xv_1$ and $\xv_2$ in $R^n,$ the Lee distance between $\xv_1$ and $\xv_2$ is given by $d(\xv_1,\xv_2)=w_L(\xv_1-\xv_2).$

A code $C$ of length $n$ over $R$ is defined as  a nonempty subset of $R^n.$ Any element of a code $C$ is called a codeword. A code $C$ is called linear if and only if $C$ is an
$R$-submodule of $R^n.$ The minimum Lee distance of $C$ is the smallest nonzero Lee
distance between all pairs of distinct codewords. The minimum Lee weight of $C$ is the
smallest nonzero Lee weight among all codewords. It is easy to see that if $C$ is linear, then the minimum
Lee distance of the code $C$ is the same as its minimum Lee weight. In this paper, we always assume
that $C$ is a linear code over $R.$  Note that the Hamming distance $d_H(\xv_1,\xv_2)$ is defined similarly.

Since, for every $\alpha \in \FF_q$ and $r,s \in R$ the map $\phi$ satisfies $\phi(r+\alpha s)=\phi(r)+\alpha \phi(s),$
then we have that $\Phi$ is $\FF_q$-linear.  Now, let $\xv,\yv \in R^n.$  Then $d_L(\xv,\yv)=w_L(\xv-\yv)
=w_H(\Phi(\xv-\yv))=w_H(\Phi(\xv)-\Phi(\yv))=d_H(\Phi(\xv)-\Phi(\yv)).$  Hence, we have proven the lemma below.

\begin{lem}\label{L-isometry}
The Gray map $\Phi$ is an isometry, namely a distance-preserving map, from $(R^n,~d_L)$ to
$(\FF_q^{5n},d_H).$ Moreover it is also $\FF_q$-linear.
\end{lem}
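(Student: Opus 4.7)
The plan is to handle the two assertions in order: first the $\FF_q$-linearity of $\Phi$, then the distance-preserving property. Both will reduce to the analogous facts for the single-coordinate map $\phi : R \to \FF_q^5$, since $\Phi$ is just $\phi$ applied componentwise.

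First I would verify that $\phi$ is $\FF_q$-linear. By the decomposition (\ref{crt}), every $r \in R$ admits a \emph{unique} representation $r = b_1\eta_1 + b_2\eta_2 + b_3\eta_3 + b_4\eta_4 + b_5\eta_5$ with $b_i \in \FF_q$, so the coordinate extraction map $r \mapsto (b_1,\ldots,b_5)$ is $\FF_q$-linear. Composing with right-multiplication by the fixed $5 \times 5$ matrix over $\FF_q$ that defines $\phi$ preserves $\FF_q$-linearity, yielding $\phi(r + \alpha s) = \phi(r) + \alpha \phi(s)$ for all $\alpha \in \FF_q$ and $r,s \in R$. Extending componentwise, $\Phi(\xv + \alpha \yv) = \Phi(\xv) + \alpha \Phi(\yv)$ on $R^n$.

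For the isometry, I would simply unpack definitions. Since $w_L(\rv) = \sum_i w_L(r_i) = \sum_i w_H(\phi(r_i)) = w_H(\Phi(\rv))$, the map $\Phi$ sends the Lee weight on $R^n$ to the Hamming weight on $\FF_q^{5n}$. Hence for any $\xv, \yv \in R^n$,
$$d_L(\xv,\yv) \;=\; w_L(\xv - \yv) \;=\; w_H(\Phi(\xv - \yv)) \;=\; w_H(\Phi(\xv) - \Phi(\yv)) \;=\; d_H(\Phi(\xv), \Phi(\yv)),$$
where the third equality uses the additivity of $\Phi$ just established, and the last equality is the definition of Hamming distance. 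I do not expect a genuine obstacle: the whole argument is bookkeeping on top of (i) the internal direct sum decomposition (\ref{crt}), which makes extraction of the $b_i$ an $\FF_q$-linear operation, and (ii) the fact that the Lee weight on $R$ is \emph{defined} as $w_H \circ \phi$. In particular, the specific entries of the $5 \times 5$ matrix (and even its invertibility) are irrelevant for this lemma; only the additivity of $\phi$ over $\FF_q$ is used.
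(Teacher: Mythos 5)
Your proposal is correct and follows essentially the same route as the paper: establish the $\FF_q$-linearity of $\phi$ (hence of $\Phi$), then chain the equalities $d_L(\xv,\yv)=w_L(\xv-\yv)=w_H(\Phi(\xv-\yv))=w_H(\Phi(\xv)-\Phi(\yv))=d_H(\Phi(\xv),\Phi(\yv))$. The only difference is that you justify the linearity of $\phi$ explicitly via the uniqueness of the idempotent decomposition and the fixed matrix, whereas the paper simply asserts it; this is a welcome but minor elaboration.
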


Moreover, if $C$ is a linear code over the ring $R$ then we have that the Gray image of $C$ is also
a linear code over $\FF_q.$

\begin{lem}\label{L-gmap}
Let $C$ be a $[n,A,d_L]$ linear code over $R,$ where $n,~A$ and $d$ are the code
length, the number of codewords and the minimum Lee distance of $C,$ respectively.
Then $\Phi(C)$ is a $[5n,\log_q A, d_H]$ linear code over $\FF_q.$
\end{lem}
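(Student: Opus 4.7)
The plan is to establish the three assertions about $\Phi(C)$ (length, dimension, minimum distance) by leveraging Lemma~\ref{L-isometry} together with one additional observation about the Gray map $\phi$. The length claim is immediate from the fact that $\Phi$ sends $R^n$ to $\FF_q^{5n}$, so only the other two are substantive.

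First I would dispose of linearity. Since $\FF_q \subseteq R$, any $R$-submodule of $R^n$ is automatically an $\FF_q$-subspace, so $C$ is $\FF_q$-linear. Because $\Phi$ is $\FF_q$-linear by Lemma~\ref{L-isometry}, its image $\Phi(C)$ is an $\FF_q$-subspace of $\FF_q^{5n}$, hence a linear code over $\FF_q$ of length $5n$.

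Next, for the dimension, I would show that $\Phi$ is injective, so that $|\Phi(C)|=|C|=A$ and therefore $\dim_{\FF_q}\Phi(C)=\log_q A$. By the CRT decomposition~\eqref{crt}, $R$ is a $5$-dimensional $\FF_q$-vector space with basis $\eta_1,\ldots,\eta_5$, and under this basis the $\FF_q$-linear map $\phi:R\to\FF_q^5$ is given by the $5\times 5$ matrix $M$ displayed in its definition. To conclude that $\phi$ (and hence $\Phi$, acting coordinatewise) is bijective, it suffices to verify that $M$ is nonsingular over $\FF_q$; since $p$ is odd, the entries $2,3,6,-6$ are all nonzero in $\FF_q$, and a direct determinant computation (or verification that the rows of $M$ are $\FF_q$-linearly independent, exploiting its near-circulant structure) completes the argument.

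Finally, the minimum-distance statement follows directly from Lemma~\ref{L-isometry}: for distinct $\xv,\yv\in C$ we have $d_H(\Phi(\xv),\Phi(\yv))=d_L(\xv,\yv)$, and since $\Phi$ restricts to a bijection between $C$ and $\Phi(C)$, the minimum Hamming distance of $\Phi(C)$ equals the minimum Lee distance $d_L$ of $C$. The main obstacle in this plan is the invertibility of $M$: this is the only step that is not a formal consequence of Lemma~\ref{L-isometry}, and it is precisely the step in which the specific choice of the coefficients in $M$ enters essentially, upgrading $\phi$ from a distance-preserving $\FF_q$-linear map to a genuine $\FF_q$-linear isomorphism.
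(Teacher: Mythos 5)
Your proposal is correct and follows essentially the same route as the paper: linearity and the minimum distance come from Lemma~\ref{L-isometry}, the length is immediate, and the dimension $\log_q A$ comes from the bijectivity of $\Phi$. The only difference is one of detail --- the paper merely asserts that the bijectivity of $\Phi$ is ``easy to verify,'' whereas you correctly isolate it as the one substantive point and indicate how to check it, namely by verifying that the circulant matrix $M$ defining $\phi$ is nonsingular over $\FF_q$.
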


\begin{proof}
From Lemma \ref{L-isometry}, we see that $\Phi(C)$ is a linear code over $\FF_q.$  By definition of
the Gray map $\Phi$ we have that $\Phi(C)$ is of length $5n.$  Moreover, since $\Phi$ is a bijection, which is easy
to verify, we have that $\Phi(C)$ has dimension $\log_q A.$  Finally, the isometry of $\Phi$ implies
$\Phi(C)$ has minimum Hamming distance $d_H.$
\end{proof}

Let $\xv=(x_0,x_1,\ldots,x_{n-1})$ and $\yv=(y_0,y_1,\ldots,y_{n-1})$ be two vectors in $R^n.$
Then the product of $\xv$ and $\yv$ is defined as
\[
\xv \cdot \yv=\sum_{i=0}^{n-1}x_i y_i.
\]

The dual of a code $C$ over $R,$ denoted by $C^\perp,$ is defined by
\[
C^\perp:=\{\xv\in R^n\color{black}:~\xv \cdot \yv=0,~\text{for all }\yv \in C\}.
\]
If $C \seq C^\perp,$ then $C$ is said to be a self-orthogonal code.

We also define the Euclidean inner product in $\FF_q^{n},$
\[
[\av,\bv]:=\sum_{i=0}^{n-1} a_i b_i,
\]
for all $\av,\bv \in \FF_q^{n}.$  The dual code $C$ over $\FF_q$ as well as self-orthogonality of the code
are defined similarly.

\begin{theorem}\label{T-SO}
Let $C$ be a linear code over $R.$ Then $\Phi(C^\perp)=\Phi(C)^\perp.$ Moreover, if $C$ is
self-orthogonal over $R,$ then $\Phi(C)$ is also self-orthogonal over $\FF_q.$
\end{theorem}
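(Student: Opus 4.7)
The plan is to first establish the inclusion $\Phi(C^\perp)\seq\Phi(C)^\perp$ directly from the defining structure of $\Phi$, and then upgrade the inclusion to equality by a cardinality argument. The ``moreover'' clause will then follow immediately.

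For the inclusion, the key step I would verify is an identity on the $5\times 5$ matrix $M$ defining $\phi$: a direct computation, shortened by the fact that $M$ is circulant (each row is the right cyclic shift of the previous), gives $MM^T=121\,I_5$ in $\FF_q$, with each diagonal entry equal to $6^2+2^2+3^2+(-6)^2+6^2=121$ and all off-diagonal entries vanishing. The scalar $121$ is nonzero in $\FF_q$ since the hypothesis $4\mid(p-1)$ rules out $p=11$, so $M$ is in particular invertible. From this identity, writing $r=\sum_k b_k\eta_k$ and $s=\sum_k c_k\eta_k$ in $R$, one obtains $[\phi(r),\phi(s)]=121\sum_{k=1}^{5}b_k c_k$. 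Now fix $\xv=(x_0,\ldots,x_{n-1})\in C^\perp$ and $\yv=(y_0,\ldots,y_{n-1})\in C$, and expand $x_i=\sum_k b_{i,k}\eta_k$, $y_i=\sum_k c_{i,k}\eta_k$. Using $\eta_k\eta_\ell=\delta_{k\ell}\eta_k$, the relation $\xv\cdot\yv=\sum_k\bigl(\sum_i b_{i,k}c_{i,k}\bigr)\eta_k=0$ in $R$, together with the $\FF_q$-linear independence of $\eta_1,\ldots,\eta_5$, forces $\sum_i b_{i,k}c_{i,k}=0$ for every $k$; hence $[\Phi(\xv),\Phi(\yv)]=0$ and $\Phi(\xv)\in\Phi(C)^\perp$.

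For the reverse inclusion I would argue by counting. Using the decomposition (\ref{crt}), every linear code $C\seq R^n$ splits as $C=\eta_1 C_1\oplus\cdots\oplus\eta_5 C_5$ with each $C_k$ an $\FF_q$-linear subspace of $\FF_q^n$; the same idempotent orthogonality then yields $C^\perp=\eta_1 C_1^\perp\oplus\cdots\oplus\eta_5 C_5^\perp$, and consequently $|C|\cdot|C^\perp|=\prod_k q^n=q^{5n}=|R|^n$. Since $\Phi$ is an $\FF_q$-linear bijection, this gives $|\Phi(C^\perp)|=|C^\perp|=q^{5n}/|C|=|\Phi(C)^\perp|$, so the previously established inclusion is in fact equality. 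The moreover clause then reads: if $C\seq C^\perp$ then $\Phi(C)\seq\Phi(C^\perp)=\Phi(C)^\perp$.

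The principal obstacle is the matrix identity $MM^T=121\,I_5$; it is the only step that truly depends on the particular choice of the entries $6,2,3,-6$ in the definition of $\phi$. The verification is a finite computation but easy to slip up on by hand—framing it via the circulant structure reduces the check to a single row dotted against all five rows, which keeps the arithmetic manageable.
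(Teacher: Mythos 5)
Your proposal is correct and follows essentially the same route as the paper: establish $\Phi(C^\perp)\seq\Phi(C)^\perp$ by expanding codewords in the orthogonal idempotents and computing the Gray-image inner product, then upgrade to equality via $|C|\,|C^\perp|=q^{5n}$ and the bijectivity of $\Phi$. Your explicit verification that $MM^T=121\,I_5$ (off-diagonal entries included) is a welcome sharpening of a step the paper uses only implicitly when it writes $\Phi(\xv)\cdot\Phi(\yv)=k\sum_{i,j}x_{ij}y_{ij}$ with $k=121$.
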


\begin{proof}
Let $\xv=(x_0,x_1,\ldots,x_{n-1}) \in C$ and $\yv=(y_0,y_1,\ldots,y_{n-1}) \in C^\perp.$  For $i \in [0,n-1]_\ZZ,$ let
$x_i=x_{i1}\eta_1+x_{i2}\eta_2+x_{i3}\eta_3+x_{i4}\eta_4+x_{i,5}\eta_5$ and $y_i=y_{i1}\eta_1+y_{i2}\eta_2+y_{i3}\eta_3+y_{i4}\eta_4+y_{i5}\eta_5.$
Then we have
\begin{align*}
0&=\xv\cdot\yv\\
 &=\sum_{i=0}^{n-1}x_{i}y_{i}\\
&=\sum_{i=0}^{n-1} x_{i1}y_{i1}\eta_1+x_{i2}y_{i2}\eta_2+x_{i3}y_{i3}\eta_3
+x_{i4}y_{i4}\eta_4+x_{i5}y_{i5}\eta_5.
\end{align*}
Hence, for all $j \in [1,5]_\ZZ,$ we have $\displaystyle \sum_{i=0}^{n-1}x_{ij}y_{ij}=0,$  which implies
\[
\Phi(\xv)\cdot\Phi(\yv)
=k\sum_{i=0}^{n-1}x_{i1}y_{i1}+x_{i2}y_{i2}+x_{i3}y_{i3}+x_{i4}y_{i4}+x_{i5}y_{i5}=0,
\]
with $k=6^2+6^2+(-6)^2+3^2+2^2$. Therefore, $\Phi(C^\perp)\seq\Phi(C)^\perp.$ By using the fact that $\Phi$ is a bijection, then $|\Phi(C)|=|C|$ and also
$|\Phi(C^\perp)|=|C^\perp|.$  Furthermore, since $|R^n|=|C||C^\perp|$ and $|\FF_q^{5n}|=|\Phi(C)||\Phi(C)^\perp|$, we conclude that $|\Phi(C^\perp)|=|\Phi(C)^\perp|,$ and hence $\Phi(C^\perp)=\Phi(C)^\perp.$  Moreover, if $C$ is self-orthogonal, then $\Phi(C)\seq \Phi(C^\perp)=\Phi(C)^\perp,$  and hence $\Phi(C)$ is also self-orthogonal.
\end{proof}

Now,  for all $i \in [1,5]_\ZZ,$ define a code $C_i \seq \FF_q^n$  as follows:
\[
C_1:=\{\cv_1 \in \FF_q^{n}:~\exists \cv_2,\cv_3,\cv_4,\cv_5 \in \FF_q^{n} \text{ s.t. }
\cv_1\eta_1+\cv_2\eta_2+\cv_3\eta_3+\cv_4\eta_4+\cv_5\eta_5 \in C\},
\]
\[
C_2:=\{\cv_2 \in \FF_q^{n}:~\exists \cv_1,\cv_3,\cv_4,\cv_5 \in \FF_q^{n} \text{ s.t. }
\cv_1\eta_1+\cv_2\eta_2+\cv_3\eta_3+\cv_4\eta_4+\cv_5\eta_5 \in C\},
\]
\[
C_3:=\{\cv_3 \in \FF_q^{n}:~\exists \cv_1,\cv_2,\cv_4,\cv_5 \in \FF_q^{n} \text{ s.t. }
\cv_1\eta_1+\cv_2\eta_2+\cv_3\eta_3+\cv_4\eta_4+\cv_5\eta_5 \in C\},
\]
\[
C_4:=\{\cv_4 \in \FF_q^{n}:~\exists \cv_1,\cv_2,\cv_3,\cv_5 \in \FF_q^{n} \text{ s.t. }
\cv_1\eta_1+\cv_2\eta_2+\cv_3\eta_3+\cv_4\eta_4+\cv_5\eta_5 \in C\},
\]
\[
C_5:=\{\cv_5 \in \FF_q^{n}:~\exists \cv_1,\cv_2,\cv_3,\cv_4 \in \FF_q^{n} \text{ s.t. }
\cv_1\eta_1+\cv_2\eta_2+\cv_3\eta_3+\cv_4\eta_4+\cv_5\eta_5 \in C\}.
\]

It is not difficult to see that for all $i \in [1,5]_\ZZ,$ the code $C_i$ are all linear over $\FF_q.$  Moreover,
the linear code $C$ over $R$ can be uniquely expressed as
\begin{equation}\label{oplus-C}
C=C_1 \eta_1 \oplus C_2 \eta_2 \oplus C_3 \eta_3 \oplus C_4 \eta_4 \oplus C_5 \eta_5.
\end{equation}

Let $G$ be a generator matrix of a code $C$ over $R,$
and let $\Phi(G)$ be a generator matrix of a code $\Phi(C)$ over $\FF_q.$  Then, by Equation (\ref{oplus-C}) we have
\[
G=\begin{pmatrix}
\eta_1 G_1\\ \eta_2 G_2\\ \eta_3 G_3\\ \eta_4 G_4\\ \eta_5 G_5
\end{pmatrix},
\]
and
\[
\Phi(G)=\begin{pmatrix}
\Phi(\eta_1 G_1)\\ \Phi(\eta_2 G_2)\\ \Phi(\eta_3 G_3)\\ \Phi(\eta_4 G_4)\\ \Phi(\eta_5 G_5)
\end{pmatrix},
\]
where for all $i \in [1,5]_\ZZ,$ $G_i$ is a generator matrix of $C_i.$

\begin{lem}\label{L-dual}
Let $C=C_1\eta_1 \oplus C_2 \eta_2 \oplus C_3 \eta_3 \oplus C_4 \eta_4  \oplus C_5\eta_5$
be a linear code of length $n$ over $R.$
Then
\[
C^\perp=C_1^\perp \eta_1 \oplus C_2^\perp \eta_2 \oplus C_3^\perp \eta_3 \oplus C_4^\perp \eta_4 \oplus C_5^\perp \eta_5.
\]
Moreover, $C$ is a self-orthogonal code over $R$ if and only if $C_1, C_2, C_3, C_4,$ and $C_5$ are all
self-orthogonal codes over $\FF_q.$
\end{lem}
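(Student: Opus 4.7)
The plan is to leverage the orthogonality relations $\eta_i\eta_j=0$ for $i\neq j$ and $\eta_i^2=\eta_i$ (which follow from the CRT decomposition in Equation (\ref{crt})) together with the $\FF_q$-linear independence of $\{\eta_1,\eta_2,\eta_3,\eta_4,\eta_5\}$ inside $R$.

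First I would write an arbitrary pair of vectors in $R^n$ in their idempotent decomposition as $\xv=\sum_{j=1}^{5}\cv_j\eta_j$ and $\yv=\sum_{j=1}^{5}\dv_j\eta_j$, where $\cv_j,\dv_j\in\FF_q^n$. Expanding the inner product component-by-component and using $\eta_j\eta_k=\delta_{jk}\eta_j$, I would obtain the key identity
\[
\xv\cdot\yv \;=\; \sum_{j=1}^{5}\,(\cv_j\cdot\dv_j)\,\eta_j.
\]
Because the $\eta_j$ are $\FF_q$-linearly independent, this sum vanishes if and only if $\cv_j\cdot\dv_j=0$ in $\FF_q$ for each $j\in[1,5]_\ZZ$. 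This reduces the orthogonality condition over $R$ to five simultaneous orthogonality conditions over $\FF_q$.

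Next I would establish the two inclusions. For $C_1^\perp\eta_1\oplus\cdots\oplus C_5^\perp\eta_5\seq C^\perp$: pick $\yv=\sum_j\dv_j\eta_j$ with each $\dv_j\in C_j^\perp$ and any $\xv=\sum_j\cv_j\eta_j\in C$; by definition of the $C_j$, each $\cv_j$ lies in $C_j$, so every $\cv_j\cdot\dv_j=0$ and hence $\xv\cdot\yv=0$ by the identity above. For the reverse inclusion, fix $\yv\in C^\perp$, decompose $\yv=\sum_j\dv_j\eta_j$, and for any $j$ and any $\cv\in C_j$ use the definition of $C_j$ to produce a codeword $\xv\in C$ whose $j$-th component is $\cv$; then $\xv\cdot\yv=0$ forces $\cv\cdot\dv_j=0$, so $\dv_j\in C_j^\perp$. (Alternatively one can observe $|C|=\prod_i|C_i|$ and $|C||C^\perp|=q^{5n}=\prod_i|C_i||C_i^\perp|$, and combine with one inclusion.)

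Finally, for the self-orthogonality statement, using the uniqueness of the $\eta_j$-decomposition, $C\seq C^\perp$ holds if and only if $C_j\eta_j\seq C_j^\perp\eta_j$ for every $j$, which in turn is equivalent to $C_j\seq C_j^\perp$ for all $j\in[1,5]_\ZZ$. I do not anticipate a major obstacle here; the only delicate step is the reverse inclusion $C^\perp\seq\bigoplus C_j^\perp\eta_j$, which needs a careful use of the definition of each $C_j$ (namely, that an element of $C_j$ can always be realized as the $j$-th coordinate of some codeword in $C$), and this is exactly where the choice of inserting arbitrary elements in the other slots must be handled cleanly.
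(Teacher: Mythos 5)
Your proof is correct and follows essentially the same route as the paper's: both rest on the orthogonal-idempotent expansion $\xv\cdot\yv=\sum_{j}(\cv_j\cdot\dv_j)\,\eta_j$ and the $\FF_q$-linear independence of the $\eta_j$. The only cosmetic difference is that the paper packages the reverse inclusion through auxiliary component codes $\widehat{C}_i$ of $C^\perp$ and a uniqueness-of-decomposition argument, whereas you prove both inclusions directly from the inner-product identity.
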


\begin{proof}
Define
\[
\widehat{C}_1:=\{\cv_1 \in \FF_q^{n}:~\exists \cv_2,\cv_3,\cv_4,\cv_5 \in \FF_q^{n} \text{ s.t. }
\cv_1\eta_1+\cv_2\eta_2+\cv_3\eta_3+\cv_4\eta_4+\cv_5\eta_5 \in C^\perp\},
\]
\[
\widehat{C}_2:=\{\cv_2 \in \FF_q^{n}:~\exists \cv_1,\cv_3,\cv_4,\cv_5 \in \FF_q^{n} \text{ s.t. }
\cv_1\eta_1+\cv_2\eta_2+\cv_3\eta_3+\cv_4\eta_4+\cv_5\eta_5 \in C^\perp\},
\]
\[
\widehat{C}_3:=\{\cv_3 \in \FF_q^{n}:~\exists \cv_1,\cv_2,\cv_4,\cv_5 \in \FF_q^{n} \text{ s.t. }
\cv_1\eta_1+\cv_2\eta_2+\cv_3\eta_3+\cv_4\eta_4+\cv_5\eta_5 \in C^\perp\},
\]
\[
\widehat{C}_4:=\{\cv_4 \in \FF_q^{n}:~\exists \cv_1,\cv_2,\cv_3,\cv_5 \in \FF_q^{n} \text{ s.t. }
\cv_1\eta_1+\cv_2\eta_2+\cv_3\eta_3+\cv_4\eta_4+\cv_5\eta_5 \in C^\perp\},
\]
and
\[
\widehat{C}_5:=\{\cv_5 \in \FF_q^{n}:~\exists \cv_1,\cv_2,\cv_3,\cv_4 \in \FF_q^{n} \text{ s.t. }
\cv_1\eta_1+\cv_2\eta_2+\cv_3\eta_3+\cv_4\eta_4+\cv_5\eta_5 \in C^\perp\}.
\]
Then $C^\perp=\widehat{C}_1 \eta_1 \oplus \widehat{C}_2 \eta_2 \oplus \widehat{C}_3 \eta_3 \oplus \widehat{C}_4 \eta_4 \oplus \widehat{C}_5 \eta_5$ and this expression is unique.  It is easy to check that $\widehat{C}_1 \seq C_1^\perp.$
Let $\xv \in C_1^\perp.$ For any $\yv=\cv_1 \eta_1+\cv_2 \eta_2+\cv_3 \eta_3+\cv_4 \eta_4+\cv_5 \eta_5 \in C,$ we have
$\xv \eta_1 \cdot \yv=0$ which implies $\xv \eta_1 \in C^\perp.$ By the unique expression of $C^\perp,$ we have
$\xv \in \widehat{C}_1$ and hence $C_1^\perp=\widehat{C}_1.$ Similarly, for all $i=2,3,4,5$ we have $C_i^\perp=\widehat{C}_i,$
and hence we conclude that $C^\perp=C_1^\perp \eta_1 \oplus C_2^\perp \eta_2 \oplus C_3^\perp \eta_3 \oplus C_4^\perp \eta_4 \oplus C_5^\perp \eta_5.$

Moreover, it is clear that $C$ is self-orthogonal over $R$ if $C_1,C_2,C_3,C_4,$ and $C_5$ are all self-orthogonal over $\FF_q.$
Now let $C$ self-orthogonal over $R$ and $\cv=\cv_1 \eta_1 + \cv_2 \eta_2+\cv_3 \eta_3 + \cv_4 \eta_4+\cv_5 \eta_5 \in C,$ with $\cv_1 \in C_1,$ $\cv_2 \in C_2,$ $\cv_3 \in C_3,$ $\cv_4 \in C_4,$ and $\cv_5 \in C_5.$  Then for all $\dv=\dv_1 \eta_1 + \dv_2 \eta_2+\dv_3 \eta_3 + \dv_4 \eta_4+\dv_5 \eta_5 \in C,$ with $\dv_1 \in C_1,$ $\dv_2 \in C_2,$ $\dv_3 \in C_3,$ $\dv_4 \in C_4,$ and $\dv_5 \in C_5,$ we have
$0=\cv \cdot \dv=[\cv_1, \dv_1] \eta_1 + [\cv_2, \dv_2] \eta_2+[\cv_3, \dv_3] \eta_3 + [\cv_4, \dv_4] \eta_4+ [\cv_5, \dv_5] \eta_5.$
It implies $0=[\cv_1, \dv_1]=[\cv_2, \dv_2]=[\cv_3, \dv_3]=[\cv_4, \dv_4]=[\cv_5, \dv_5],$ and hence for all $i \in [1,5]_{\ZZ}$ we have $\cv_i \in C_i^\perp.$ Therefore, $C_1,$ $C_2,$ $C_3,$ $C_4,$
and $C_5$ are all self-orthogonal over $\FF_q.$
\end{proof}

\section{Quantum codes from cyclic codes over $R$}

A quantum code of length $n$ and dimension $q$ over $\FF_q$ is defined to be the subspace of the Hilbert space $\left(\CC^q \right)^{\otimes n}$ of dimension $q^k$ and we denote it by $\llbracket n,k,d \rrbracket_q.$

In the class of linear codes, cyclic codes play an important rule in coding theory.  Moreover, as we can obtain many quantum codes from cyclic codes, then we provide some related results on cyclic codes over $R.$

A linear code $C \seq R^n$ over $R$ is called cyclic if $T(C)=C.$ Here, $T$ is cyclic shift operator on $R^n,$ namely for any $\cv=(c_0,c_1,c_2,\ldots,c_{n-1}) \in R^n,$ we have $T(\cv)=(c_{n-1},c_0,c_1,\ldots,c_{n-2}).$

Define $R[x] \slash \langle x^n-1\rangle:=\{c_0+c_1x+c_2x^2+\cdots+c_{n-1}x^{n-1}+\langle x^n-1\rangle:~c_0,c_1,\ldots,c_{n-1} \in R\}.$  Now, consider the following map
\[
\lambda: R^n \longrightarrow R[x] \slash \langle x^n-1\rangle,
\]
defined by
\[
\cv=(c_0,c_1,c_2,\ldots,c_{n-1}) \longmapsto c_0+c_1x+c_2x^2+\ldots+c_{n-1}x^{n-1}
+\langle x^n-1 \rangle.
\]
For convenience, we omit the term $\langle x^n-1 \rangle$ when writing any element of $R[x] \slash \langle x^n-1\rangle$. It can be proved by ease that $\lambda$ defines an $R$-module isomorphism.  Hence, we can identify a cyclic code $C$ over $R$ as an ideal of the division ring $R[x] \slash \langle x^n-1\rangle.$

Now, we recall a celebrated method to construct quantum error-correcting codes
as introduced by Calderbank, Shor, and Steane.  This well known method is called
Calderbank-Shor-Steane construction or CSS construction (see Theorem 9 and 12 in \cite{CSS}).

\begin{theorem}\cite{CSS}(CSS construction)\label{L-CSS}
Let $C_1$ and $C_2$ be two linear codes over $\FF_q$ of parameter $[n,k_1,d_1]$ and $[n,k_2,d_2],$ respectively,
such that $C_2 \seq C_1.$  Then there exists a
quantum error-correcting code with the parameters $\llbracket n,k_1-k_2,\text{min}\{d_1,d_2^\perp\}\rrbracket_q$ where
$d_2^\perp$ denotes the minimum Hamming distance of the dual code $C_2^\perp$ of $C_2.$ Further, if
$C_2=C_1^\perp,$ then there exists a quantum error-correcting code with the parameters
$\llbracket n,2k_1-n,d_1 \rrbracket_q.$
\end{theorem}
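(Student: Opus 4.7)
The plan is to realize the CSS code explicitly, then identify it with a stabilizer code and read off its distance from the classical data in the usual way. First I would construct the code space $Q \seq (\CC^q)^{\otimes n}$ as the span of the coset states
\[
|x+C_2\rangle := \frac{1}{\sqrt{|C_2|}}\sum_{y \in C_2}|x+y\rangle,\qquad x+C_2 \in C_1/C_2,
\]
where $|\cdot\rangle$ ranges over the standard computational basis indexed by $\FF_q^n$. Distinct cosets are disjoint subsets of $\FF_q^n$, so these coset states are mutually orthonormal; since there are $|C_1|/|C_2| = q^{k_1-k_2}$ of them, $Q$ has dimension $q^{k_1-k_2}$, giving the claimed parameter $k = k_1-k_2$.

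Next I would recognize $Q$ as a stabilizer code. The candidate stabilizer group is generated by the bit-flip operators $X^c$ for $c \in C_2$ together with the phase-flip operators $Z^d$ for $d \in C_1^\perp$. These commute because $C_2 \seq C_1$ forces $C_1^\perp \seq C_2^\perp$, so $c \cdot d = 0$ for every such pair. A direct computation shows that each coset state $|x+C_2\rangle$ is a simultaneous $+1$-eigenvector of every such generator, and the dimension count $q^{n-k_2-(n-k_1)} = q^{k_1-k_2}$ confirms that $Q$ fills the joint $+1$-eigenspace.

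Once $Q$ has been identified as this stabilizer code, its distance is, by the standard stabilizer criterion, the minimum Hamming weight of a Pauli operator $X^e Z^f$ that commutes with every stabilizer generator but does not itself lie in the stabilizer. The commutation condition forces $e \in C_1$ and $f \in C_2^\perp$, while membership in the stabilizer corresponds to $e \in C_2$ and $f \in C_1^\perp$. Hence any logical (undetectable) error must have either $e \in C_1\setminus C_2$, forcing its weight to be at least $d_1$, or $f \in C_2^\perp \setminus C_1^\perp$, forcing its weight to be at least $d_2^\perp$; in both cases the weight of $X^e Z^f$ is at least $\min\{d_1, d_2^\perp\}$. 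For the special case $C_2 = C_1^\perp$ one has $k_2 = n-k_1$ and $C_2^\perp = C_1$, so the parameters collapse to $\llbracket n, 2k_1-n, d_1\rrbracket_q$.

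The main obstacle is not any single calculation but the careful bookkeeping for mixed errors $X^e Z^f$ with both $e$ and $f$ nonzero: one must argue that the distance really is the minimum weight over \emph{all} such Pauli logical operators, not merely over pure-$X$ or pure-$Z$ ones. The cleanest resolution is to phrase everything symplectically on $\FF_q^{2n}$ and invoke the general stabilizer distance formula; the remaining ingredients (orthonormality of coset states, commutation of the generators, and the two dimension counts) are then routine.
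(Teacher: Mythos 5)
The paper does not prove this statement at all: it is imported verbatim from the reference \cite{CSS} (Theorems 9 and 12 there), so there is no internal argument to compare yours against. Your proposal is the standard stabilizer-theoretic proof of the CSS construction and its outline is essentially correct: the coset states $|x+C_2\rangle$ indexed by $C_1/C_2$ give the dimension count $q^{k_1-k_2}$; the group generated by $X^c$ ($c\in C_2$) and $Z^d$ ($d\in C_1^\perp$) is abelian precisely because $C_2\seq C_1$; and the normalizer-modulo-stabilizer computation ($e\in C_1$, $f\in C_2^\perp$ for commutation versus $e\in C_2$, $f\in C_1^\perp$ for triviality) yields the distance bound $\min\{d_1,d_2^\perp\}$, with the special case $C_2=C_1^\perp$ following by substitution. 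Two details deserve explicit care in a written-up version. First, for non-prime $q=p^r$ the generalized Pauli operators must be defined via the trace form, $Z(b)|x\rangle=\omega^{\operatorname{tr}_{\FF_q/\FF_p}(b\cdot x)}|x\rangle$; commutation of the generators still follows from $c\cdot d=0$, but the converse step (that $X^eZ^f$ commuting with all of $S$ forces $e\in C_1$ and $f\in C_2^\perp$ with respect to the \emph{Euclidean} dual used in the statement) needs the nondegeneracy of the trace together with the $\FF_q$-linearity of $C_1^\perp$ and $C_2$. Second, the argument actually shows the distance is \emph{at least} $\min\{d_1,d_2^\perp\}$, which is what is needed for the existence claim as stated; you should phrase the conclusion that way rather than as an equality. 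With those caveats your route is sound, and it has the advantage over the paper of being self-contained rather than a citation.
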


The following two properties are easy to prove but important for our construction.

\begin{lem}\label{L-cyclic-1}
A linear code $C=C_1 \eta_1 \oplus C_2 \eta_2 \oplus C_3 \eta_3 \oplus C_4 \eta_4 \oplus C_5 \eta_5$ over $R$
is cyclic if and only if $C_1,C_2,C_3,C_4,$ and $C_5$ are all cyclic over $\FF_q.$
\end{lem}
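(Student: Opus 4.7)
The plan is to reduce the statement to the observation that the cyclic shift operator $T$ on $R^{n}$ commutes with the $\FF_{q}$-linear decomposition $R=\FF_{q}\eta_{1}\oplus\cdots\oplus\FF_{q}\eta_{5}$ extended coordinate-wise to $R^{n}$. Concretely, I would first record the identity
\[
T(\cv_{1}\eta_{1}+\cv_{2}\eta_{2}+\cv_{3}\eta_{3}+\cv_{4}\eta_{4}+\cv_{5}\eta_{5})
=T(\cv_{1})\eta_{1}+T(\cv_{2})\eta_{2}+T(\cv_{3})\eta_{3}+T(\cv_{4})\eta_{4}+T(\cv_{5})\eta_{5},
\]
for any $\cv_{1},\ldots,\cv_{5}\in\FF_{q}^{n}$. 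This follows directly from the definition of $T$ (a permutation of coordinates) together with the fact that the $\eta_{i}$ are scalars in $R$, so that scalar multiplication by $\eta_{i}$ acts coordinate-wise and therefore commutes with reordering of coordinates. Once this identity is in hand, both directions of the equivalence are a one-line verification.

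For the forward direction, assume $C$ is cyclic. Fix an index $i\in[1,5]_{\ZZ}$ and take any $\cv_{i}\in C_{i}$. By the definition of $C_{i}$, there exist $\cv_{j}\in\FF_{q}^{n}$ for $j\neq i$ such that $\cv:=\cv_{1}\eta_{1}+\cv_{2}\eta_{2}+\cv_{3}\eta_{3}+\cv_{4}\eta_{4}+\cv_{5}\eta_{5}\in C$. Since $C$ is cyclic, $T(\cv)\in C$, and the identity above rewrites $T(\cv)$ as $T(\cv_{1})\eta_{1}+\cdots+T(\cv_{5})\eta_{5}$. Reading off the $i$-th coordinate in the decomposition~(\ref{oplus-C}) shows $T(\cv_{i})\in C_{i}$, so $C_{i}$ is cyclic over $\FF_{q}$.

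For the backward direction, assume each $C_{i}$ is cyclic. Let $\cv\in C$ and write $\cv=\cv_{1}\eta_{1}+\cv_{2}\eta_{2}+\cv_{3}\eta_{3}+\cv_{4}\eta_{4}+\cv_{5}\eta_{5}$ with $\cv_{i}\in C_{i}$, using (\ref{oplus-C}). Apply the commutation identity to obtain $T(\cv)=T(\cv_{1})\eta_{1}+\cdots+T(\cv_{5})\eta_{5}$. Because each $C_{i}$ is cyclic, $T(\cv_{i})\in C_{i}$ for every $i$, and therefore $T(\cv)\in C_{1}\eta_{1}\oplus\cdots\oplus C_{5}\eta_{5}=C$. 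Hence $T(C)\seq C$, so $C$ is cyclic.

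There is no real obstacle; the only step that warrants attention is the commutation identity between $T$ and the idempotent decomposition. It is essentially a bookkeeping remark, but it is the pivot on which both implications rest, so I would state it explicitly (perhaps as a short preliminary computation) before launching into the two symmetric directions of the proof.
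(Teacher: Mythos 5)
Your proof is correct and follows essentially the same route as the paper: both arguments rest on the identity $T\bigl(\sum_i \cv_i\eta_i\bigr)=\sum_i T(\cv_i)\eta_i$ and then run the two directions by lifting a component to a codeword of $C$ (forward) and decomposing a codeword of $C$ (backward). Your version is slightly cleaner in that it isolates this commutation identity explicitly, whereas the paper uses it implicitly inside the coordinate computation.
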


\begin{proof}
($\Longrightarrow$)  For $i \in [1,5],$ let $(c_{i1},c_{i2},\ldots,c_{in}) \in C_i.$  Also, for $j \in [1,5],$ let $c_j=\eta_1 c_{j1}+\eta_2 c_{j2}+\eta_3 c_{j3}+\eta_4 c_{j4}+\eta_5 c_{j5}.$  Then, we have $\cv=(c_1,c_2,\ldots,c_n) \in C.$  Since $C$ is cyclic over $R,$ then it follows that
$(c_n,c_1,c_2,\ldots,c_{n-1}) \in C.$  In addition, since $(c_n,c_1,c_2,\ldots,c_{n-1})=\sum_{j=1}^5
\eta_j (c_{jn},c_{j1},c_{j2},\ldots,c_{jn-1}),$ we have that
$(c_{in},c_{i1},c_{i2},\ldots,c_{in-1}) \in C_i,$ for $i\in [1,5].$  It means $C_1,C_2,C_3,C_4,$ and $C_5$ are all cyclic codes over $\FF_q.$

($\Longleftarrow$) Suppose $C_1,C_2,C_3,C_4,$ and $C_5$ are all cyclic codes over $\FF_q.$  Let $\cv=(c_1,c_2,\ldots,c_n)\in C,$ where $c_j=\eta_1c_{j1}+\eta_2c_{j2}+\eta_3c_{j3}+\eta_4c_{j4}+\eta_5c_{j5},$ for $j \in [1,5].$  Then for $i \in [1,5],$ we have
$(c_{i1},c_{i2},\ldots,c_{in}) \in C_i,$ which implies  $(c_n,c_1,c_2,\ldots,c_{n-1})=\sum_{j=1}^5 \eta_j (c_{jn},c_{j1},c_{j2},\ldots,c_{jn-1}) \in \oplus_{j=1}^5 \eta_j C_j=C.$  Hence, $C$ is a cyclic code over $R.$
\end{proof}

\begin{lem}\label{L-cyclic-2}
Let $C=C_1 \eta_1 \oplus C_2 \eta_2 \oplus C_3 \eta_3 \oplus C_4 \eta_4 \oplus C_5 \eta_5$ be a cyclic code over $R$
of length $n.$  Then there exists a unique polynomial $g(x) \in R[x] / \langle x^n-1 \rangle$ such that
\[
C=\langle g(x) \rangle,
\]
where $g(x)=\eta_1 g_1(x)+\eta_2 g_2(x)+\eta_3 g_3(x)+\eta_4 g_4(x)+\eta_5 g_5(x)$ and
$g_1(x), g_2(x), g_3(x), g_4(x)$ and $g_5(x)$ are the generator polynomial of cyclic codes
$C_1,C_2,C_3,C_4,$ and $C_5$ over $\FF_q,$ respectively.  Moreover, $g(x)$ is a divisor of $x^n-1$ over $R.$

\end{lem}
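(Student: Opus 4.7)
The plan is to build $g(x)$ from the generator polynomials of the component codes $C_i$ and then verify the three assertions (generation, uniqueness, divisibility) separately, leveraging the orthogonal-idempotent calculus coming from Equation~(\ref{crt}).

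First, by Lemma~\ref{L-cyclic-1} each $C_i$ is a cyclic code over $\FF_q$, so there is a unique monic generator polynomial $g_i(x)\in\FF_q[x]/\langle x^n-1\rangle$ with $g_i(x)\mid x^n-1$ in $\FF_q[x]$. Set $g(x):=\eta_1g_1(x)+\eta_2g_2(x)+\eta_3g_3(x)+\eta_4g_4(x)+\eta_5g_5(x)$. To prove $\langle g(x)\rangle\subseteq C$, I would observe that any multiple $g(x)r(x)$ with $r(x)=\sum_{i=1}^5\eta_i r_i(x)\in R[x]/\langle x^n-1\rangle$ simplifies, using $\eta_i\eta_j=\delta_{ij}\eta_i$, to $\sum_{i=1}^5\eta_i g_i(x)r_i(x)$, which lies in $\bigoplus_{i=1}^5\eta_iC_i=C$. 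Conversely, for any $c(x)=\sum_{i=1}^5\eta_i c_i(x)\in C$ with $c_i\in C_i$, write $c_i(x)=g_i(x)h_i(x)$ in $\FF_q[x]/\langle x^n-1\rangle$; then, using $\eta_i=\eta_i^2$ and the orthogonality, $c(x)=\sum_i\eta_i g_i(x)\eta_ih_i(x)=g(x)\cdot\bigl(\sum_i\eta_i h_i(x)\bigr)\in\langle g(x)\rangle$.

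For divisibility, since $g_i(x)\mid x^n-1$ in $\FF_q[x]$, write $x^n-1=g_i(x)\widetilde{h}_i(x)$ for each $i$. Then, using $\eta_1+\cdots+\eta_5=1$ and the orthogonality,
\[
g(x)\cdot\Bigl(\sum_{i=1}^5\eta_i\widetilde{h}_i(x)\Bigr)=\sum_{i=1}^5\eta_i g_i(x)\widetilde{h}_i(x)=(x^n-1)\sum_{i=1}^5\eta_i=x^n-1,
\]
so $g(x)$ divides $x^n-1$ in $R[x]/\langle x^n-1\rangle$ (hence, lifting, in $R[x]$).

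For uniqueness, the claim is that the polynomial of the specified shape $\sum_i\eta_i g_i(x)$ with $g_i$ the (monic) generator of $C_i$ is uniquely determined by $C$. This follows because the CRT decomposition in~(\ref{crt}) yields a unique expression of any element of $R[x]/\langle x^n-1\rangle$ as $\sum_i\eta_i p_i(x)$ with $p_i(x)\in\FF_q[x]/\langle x^n-1\rangle$, and the components $C_i$ are themselves uniquely determined by $C$ via Equation~(\ref{oplus-C}); each $g_i$ is then the unique monic generator of $C_i$ as a cyclic $\FF_q$-code. The only mildly delicate point—and the main thing I want to be careful about—is checking that multiplying through by $\eta_i$ really does collapse cross terms correctly in both directions; once the identities $\eta_i\eta_j=\delta_{ij}\eta_i$ and $\sum_i\eta_i=1$ are in hand from~(\ref{crt}), the rest is bookkeeping.
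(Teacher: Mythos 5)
Your proposal is correct and follows essentially the same route as the paper: define $g(x)=\sum_i\eta_i g_i(x)$ from the component generator polynomials, use the orthogonal-idempotent identities to collapse cross terms for both inclusions, derive uniqueness from the uniqueness of the $g_i$, and obtain divisibility from $x^n-1=g(x)\sum_i\eta_i h_i(x)$. The only cosmetic difference is that you verify $\langle g(x)\rangle\subseteq C$ by expanding a general multiple $g(x)r(x)$, whereas the paper simply notes $g(x)\in C$ and that $C$ is an ideal.
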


\begin{proof}
Since $g(x)\in C_1 \eta_1 \oplus C_2 \eta_2 \oplus C_3 \eta_3 \oplus C_4 \eta_4 \oplus C_5 \eta_5=C$, then $\langle g(x) \rangle \subseteq C$. For any $\cv\in C$, there exists $\cv_i\in C_i$ for $i\in [1,5]_\ZZ$ such that $\cv=\cv_1\eta_1+\cv_2\eta_2+\cv_3\eta_3+\cv_4\eta_4+\cv_5\eta_5$. We can identify $\cv$ with the polynomial $c_1(x)\eta_1+c_2(x)\eta_2+c_3(x)\eta_3+c_4(x)\eta_4+c_5(x)\eta_5$, where $c_i(x)=a_i(x)g_i(x)\in \langle g_i(x)\rangle=C_i$ for $i\in [1,5]_\ZZ$. We have $c_1(x)\eta_1+c_2(x)\eta_2+c_3(x)\eta_3+c_4(x)\eta_4+c_5(x)\eta_5=a_1(x)g_1(x)\eta_1+a_2(x)g_2(x)\eta_2+a_3(x)g_3(x)\eta_3+a_4(x)g_4(x)\eta_4+a_5(x)g_5(x)\eta_5=a(x)g(x)\in\langle g(x) \rangle$, where $a(x)=a_1(x)\eta_1+a_2(x)\eta_2+a_3(x)\eta_3+a_4(x)\eta_4+a_5(x)\eta_5$. It implies $C \seq \langle g(x) \rangle.$ Thus, $C=\langle g(x) \rangle.$ The uniqueness of $g(x)$ follows immediately from the uniqueness of $g_1(x),$ $g_2(x),$ $g_3(x),$ $g_4(x),$ and $g_5(x).$

Since for all $i \in [1,5]_\ZZ,$ $g_i(x)$ is a divisor of $x^n-1,$ then there is $h_i(x) \in \FF_q[x]$ such that $g_i(x)h_i(x)=x^n-1.$  It follows that
$x^n-1=g(x)(\eta_1 h_1(x)+\eta_2 h_2(x)+\eta_3 h_3(x)+\eta_4 h_4(x)+\eta_5 h_5(x)).$ Hence, we conclude that $g(x)$ is a divisor of $x^n-1$  over $R.$
\end{proof}

From the Lemma \ref{L-cyclic-2}, we conclude immediately that all ideals of the ring $R[x]/\langle x^n-1\rangle$ are generated by only one element, and hence the ring is principal.  Moreover, the cardinality of the cyclic code $C$ can be easily calculated as follows:
\begin{align*}
|C|&=|C_1||C_2||C_3||C_4||C_5|\\
   &=q^{n-\deg{g_1(x)}}q^{n-\deg{g_2(x)}}q^{n-\deg{g_3(x)}}q^{n-\deg{g_4(x)}}q^{n-\deg{g_5(x)}}.
\end{align*}
Moreover, by applying Lemma \ref{L-dual} and the well-known property regarding the dual code $C^\perp$ of a cyclic code $C,$ we obtained the generator polynomial of the dual code $C^\perp.$
Hence, we deduce the following properties.

\begin{cor}\label{L-cyclic-3}
The following three properties hold.
\begin{itemize}
\item[(1)]  The quotient ring $R[x]/\langle x^n-1 \rangle$ is principal.

\item[(2)] Let $C$ be a cyclic code of length $n$ over $R$ as written in the Lemma \ref{L-cyclic-2}. Then
\[
\displaystyle |C|=q^{5n-\deg{g_1(x)}-\deg{g_2(x)}-\deg{g_3(x)}-\deg{g_4(x)}-\deg{g_5(x)}}.
\]

\item[(3)]  Let $C$ be a cyclic code of length $n$ over $R$ as written in the Lemma \ref{L-cyclic-2} and
$g_1(x)h_1(x)=g_2(x)h_2(x)=g_3(x)h_3(x)=g_4(x)h_4(x)=g_5(x)h_5(x)=x^n-1.$  Then $C^\perp=\langle h(x) \rangle,$
where $h(x)=\eta_1 h_1^\ast(x)+\eta_2 h_2^\ast(x)+\eta_3 h_3^\ast(x)+\eta_4 h_4^\ast(x)+\eta_5 h_5^\ast(x)$ and $h_i^\ast(x)$
is a reciprocal polynomial of $h_i(x),$ for $i=1,2,3,4,5.$

\end{itemize}
\end{cor}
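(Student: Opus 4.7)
The plan is to derive all three parts essentially as consequences of the two preceding lemmas (Lemma \ref{L-cyclic-2} and Lemma \ref{L-dual}) together with the standard theory of cyclic codes over the finite field $\FF_q.$

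For part (1), I would simply invoke Lemma \ref{L-cyclic-2}: every cyclic code over $R$ of length $n$ equals $\langle g(x)\rangle$ for a single polynomial $g(x),$ and since ideals of $R[x]/\langle x^n-1\rangle$ are precisely cyclic codes under the identification $\lambda,$ this already shows every ideal of the quotient is principal.

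For part (2), I would use the decomposition $C=C_1\eta_1\oplus C_2\eta_2\oplus C_3\eta_3\oplus C_4\eta_4\oplus C_5\eta_5,$ which is a direct sum of $\FF_q$-vector spaces. Hence $|C|=\prod_{i=1}^5 |C_i|.$ Each $C_i$ is a cyclic code over $\FF_q$ of length $n$ with generator polynomial $g_i(x)$ of degree $\deg g_i(x),$ so it is well known that $|C_i|=q^{n-\deg g_i(x)}.$ Multiplying these together yields the claimed formula. (This is exactly the calculation already displayed just before the statement of the corollary, so the proof is essentially a citation.)

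For part (3), the strategy is to combine Lemma \ref{L-dual} with the classical description of the dual of a cyclic code over $\FF_q.$ Writing $g_i(x) h_i(x) = x^n-1$ in $\FF_q[x],$ it is a standard fact that the cyclic code $C_i^\perp$ has generator polynomial $h_i^\ast(x),$ the reciprocal polynomial of $h_i(x).$ Applying Lemma \ref{L-dual} gives the decomposition $C^\perp=C_1^\perp\eta_1\oplus\cdots\oplus C_5^\perp\eta_5,$ and applying Lemma \ref{L-cyclic-2} to this decomposition shows $C^\perp=\langle h(x)\rangle$ with $h(x)=\eta_1 h_1^\ast(x)+\eta_2 h_2^\ast(x)+\eta_3 h_3^\ast(x)+\eta_4 h_4^\ast(x)+\eta_5 h_5^\ast(x),$ as required.

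No step is a genuine obstacle here; the only thing to be careful about is the bookkeeping in part (3), namely to verify that the reciprocal polynomial really does generate the dual cyclic code over $\FF_q$ (a standard result I would cite rather than reprove) and to make sure Lemma \ref{L-dual} is applied with the correct summands $C_i^\perp$ in the right slots. Once those two ingredients are in place, the corollary assembles immediately from the two preceding lemmas.
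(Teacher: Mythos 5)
Your proposal is correct and follows essentially the same route as the paper, which derives the corollary from Lemma \ref{L-cyclic-2} (for principality and the cardinality count) together with Lemma \ref{L-dual} and the classical fact that the dual of a cyclic code over $\FF_q$ is generated by the reciprocal polynomial $h_i^\ast(x)$. The paper's own justification is exactly the short discussion preceding the corollary, so your write-up matches it in both substance and level of detail.
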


If $C$ is a cyclic code of length $n$ over $\FF_q$ with generator polynomial $g(x)$ and $g(x)h(x)=x^n-1$, it is well-known that $C^\perp=\langle h^\ast(x)\rangle$. If $C^\perp\seq C=\langle g(x)\rangle$ then $h^\ast(x)=a(x)g(x)$ for some polynomial $a(x)$, so $x^n-1=-h^\ast(x)g^\ast(x)=-a(x)g(x)g^\ast(x)$. Conversely, if $x^n-1 \equiv 0\pmod{g(x) g^\ast(x)}$, then there exists $a(x)$ such that $-h^\ast(x)g^\ast(x)=x^n-1=a(x) g(x) g^\ast(x)$. We have $h^\ast(x)=-a(x)g(x)$, so $h^\ast(x)\in \langle g(x)\rangle$. Therefore, $C^\perp=\langle h^\ast(x)\rangle\seq \langle g(x)\rangle=C$. Hence, we have proven the property, Lemma \ref{L-cyclic-4}, that give us a necessary and sufficient condition for a cyclic code over finite fields to contains its dual.

\begin{lem}\label{L-cyclic-4}
A cyclic code $C$ over $\FF_q$ with generator polynomial $g(x)$ contains its dual if and only if
\[
x^n-1 \equiv 0\pmod{g(x) g^\ast(x)},
\]
where $g^\ast(x)$ is the reciprocal polynomial of $g(x).$
\end{lem}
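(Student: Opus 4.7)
The plan is to combine two standard facts about cyclic codes over $\FF_q$: first, that if $g(x)h(x) = x^n - 1$ then $C^\perp = \langle h^\ast(x) \rangle$, which reduces the containment $C^\perp \seq C$ to the divisibility statement $g(x) \mid h^\ast(x)$ in $\FF_q[x]$; and second, that the reciprocal operation is multiplicative on factorizations whose degrees add correctly, i.e.\ $(f_1 f_2)^\ast(x) = f_1^\ast(x)\, f_2^\ast(x)$ whenever $\deg(f_1 f_2) = \deg f_1 + \deg f_2$. Over the integral domain $\FF_q[x]$ this degree condition is automatic.

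Applying this multiplicativity to the factorization $g(x) h(x) = x^n - 1$ yields the key identity
\[
g^\ast(x)\, h^\ast(x) \;=\; (x^n - 1)^\ast \;=\; -(x^n - 1),
\]
where the second equality follows from $x^n(1/x^n - 1) = 1 - x^n$. This identity is the only nontrivial ingredient, so I would establish it at the outset of the proof.

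For the forward direction, assume $C^\perp \seq C$. Then $h^\ast(x) \in \langle g(x) \rangle$, so $h^\ast(x) = a(x) g(x)$ for some $a(x) \in \FF_q[x]$. Multiplying both sides by $g^\ast(x)$ and using the key identity gives $-(x^n-1) = a(x)\, g(x)\, g^\ast(x)$, which is exactly the congruence $x^n - 1 \equiv 0 \pmod{g(x) g^\ast(x)}$. For the converse, suppose $x^n - 1 = b(x)\, g(x)\, g^\ast(x)$ for some $b(x) \in \FF_q[x]$; combining this with $-h^\ast(x)\, g^\ast(x) = x^n - 1$ and cancelling the nonzero factor $g^\ast(x)$ in the integral domain $\FF_q[x]$ produces $h^\ast(x) = -b(x) g(x)$. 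Hence $g(x) \mid h^\ast(x)$, and therefore $C^\perp = \langle h^\ast(x) \rangle \seq \langle g(x) \rangle = C$.

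The main point to be careful about is the bookkeeping around the reciprocal operation, namely verifying $(fg)^\ast = f^\ast g^\ast$ (which rests on the degrees adding exactly) and tracking the sign that appears in $(x^n - 1)^\ast = -(x^n - 1)$. Neither is a genuine obstacle; in fact the sign plays no essential role in the argument because divisibility is insensitive to unit factors in $\FF_q[x]$. Beyond these two standard ingredients no deeper idea is required.
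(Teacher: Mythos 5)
Your proof is correct and follows essentially the same route as the paper's: both reduce $C^\perp \seq C$ to $g(x) \mid h^\ast(x)$ via $C^\perp = \langle h^\ast(x)\rangle$ and then exploit the identity $g^\ast(x)h^\ast(x) = -(x^n-1)$ in both directions. Your added care in justifying the multiplicativity $(fg)^\ast = f^\ast g^\ast$ and the sign in $(x^n-1)^\ast = -(x^n-1)$ is a welcome but minor refinement of the paper's argument.
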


Similar to the above lemma, the theorem below gives us necessary and sufficient condition for the cyclic code over $R$ to contains its dual.

\begin{theorem}\label{T-condition}
Let $C=C_1 \eta_1 \oplus C_2 \eta_2 \oplus C_3 \eta_3 \oplus C_4 \eta_4 \oplus C_5 \eta_5$ be a cyclic code
of length $n$ over $R,$ and let $C=\langle g(x) \rangle.$ Then $C^\perp \seq C$ if and only if
for all $i \in [1,5]_{\ZZ}$ we have
\[
x^n-1 \equiv 0 \pmod{g_i(x)g_i^\ast(x)}.
\]

\end{theorem}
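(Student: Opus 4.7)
The plan is to reduce the statement over $R$ to the corresponding statement over $\FF_q$ componentwise, and then invoke Lemma \ref{L-cyclic-4}. Since $C$ is cyclic over $R$, Lemma \ref{L-cyclic-1} tells us that each $C_i$ is cyclic over $\FF_q$, and by Lemma \ref{L-cyclic-2} the generator polynomial $g(x)$ of $C$ decomposes as $g(x)=\sum_{i=1}^5 \eta_i g_i(x)$, where $g_i(x)$ generates $C_i$. Lemma \ref{L-dual} gives the dual decomposition $C^\perp = C_1^\perp\eta_1 \oplus C_2^\perp\eta_2 \oplus C_3^\perp\eta_3 \oplus C_4^\perp\eta_4 \oplus C_5^\perp\eta_5$, so the whole problem is already packaged in the five $\FF_q$-components.

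The key intermediate step is the equivalence
\[
C^\perp \seq C \quad \Longleftrightarrow \quad C_i^\perp \seq C_i \text{ for all } i \in [1,5]_{\ZZ}.
\]
The ($\Longleftarrow$) direction is immediate from the direct-sum decompositions of $C$ and $C^\perp$. For the ($\Longrightarrow$) direction, I would take any $\xv_i \in C_i^\perp$; then $\xv_i\eta_i \in C^\perp \seq C$, so by uniqueness of the decomposition $C = \bigoplus_j C_j\eta_j$ together with the orthogonality of the idempotents (multiplying by $\eta_i$ annihilates the other summands and fixes the $i$-th), we deduce $\xv_i \in C_i$. This is the only place where a small calculation is needed, and it is really just bookkeeping with the $\eta_i$.

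Once this equivalence is in hand, I apply Lemma \ref{L-cyclic-4} to each cyclic code $C_i$ over $\FF_q$ with generator polynomial $g_i(x)$: one has $C_i^\perp \seq C_i$ if and only if $x^n - 1 \equiv 0 \pmod{g_i(x) g_i^\ast(x)}$. Combining the two equivalences yields the theorem.

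I do not expect any serious obstacle. The only mildly delicate point is being careful that the uniqueness of the decomposition $C^\perp = \bigoplus_i \widehat{C}_i \eta_i$ from Lemma \ref{L-dual} actually gives $\widehat{C}_i = C_i^\perp$ (which Lemma \ref{L-dual} already asserts), and that orthogonality of the $\eta_i$ lets us extract each $\FF_q$-component of a codeword by multiplication. Both are routine, so the proof is short and essentially a bookkeeping argument combining Lemmas \ref{L-dual}, \ref{L-cyclic-1}, \ref{L-cyclic-2}, and \ref{L-cyclic-4}.
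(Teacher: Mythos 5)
Your proposal is correct and follows essentially the same route as the paper: both directions reduce $C^\perp \seq C$ to $C_i^\perp \seq C_i$ for each $i$ by multiplying by the orthogonal idempotents and invoking Lemma \ref{L-dual}, and then apply Lemma \ref{L-cyclic-4} componentwise. The only cosmetic difference is that the paper phrases the forward direction as $C_i^\perp\eta_i = C^\perp\eta_i \seq C\eta_i = C_i\eta_i$ rather than tracking an individual element $\xv_i$, but the content is identical.
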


\begin{proof}
($\Longrightarrow$)  Let $C^\perp \seq C.$  Then for any $i \in [1,5]_\ZZ,$ we have
\[
C_i^\perp \eta_i = C^\perp \eta_i\seq C \eta_i = C_i \eta_i.
\]

Therefore, for any $i \in [1,5]_\ZZ,$ we obtain $C_i^\perp \seq C_i.$  Hence, by Lemma \ref{L-cyclic-4}, we have that
for any $i \in [1,5]_\ZZ,$
\[
x^n-1 \equiv 0\pmod{g_i(x) g_i^\ast(x)}.
\]

($\Longleftarrow$)  For $i \in [1,5]_\ZZ,$ let $x^n-1 \equiv 0\pmod{g_i(x) g_i^\ast(x)}.$  Then, by Lemma \ref{L-cyclic-4}, we have
$C_i^\perp \seq C_i,$ which implies $C_i^\perp \eta_i \seq C_i \eta_i.$  Furthermore, by Lemma \ref{L-dual}, we obtain
\begin{align*}
C^\perp&=C_1^\perp \eta_1 \oplus C_2^\perp \eta_2  \oplus C_3^\perp \eta_3 \oplus C_4^\perp \eta_4  \oplus C_5^\perp \eta_5\\
& \seq C_1\eta_1 \oplus C_2 \eta_2 \oplus C_3 \eta_3 \oplus C_4 \eta_4  \oplus C_5\eta_5=C.
\end{align*}
\end{proof}

By using Lemma \ref{L-CSS} together with Theorem \ref{T-condition} and Theorem \ref{T-SO}, we have the following theorem to construct quantum
error-correcting codes directly.

\begin{theorem}
Let $C=C_1 \eta_1 \oplus C_2 \eta_2 \oplus C_3 \eta_3 \oplus C_4 \eta_4 \oplus C_5 \eta_5$ be a cyclic code
of length $n$ over $R$ and let $\Phi(C)$ be a linear code of parameters $[5n,k,d_L]$ over $\FF_q,$ where $d_L$ is the
minimum Lee distance of $C.$  If $C^\perp \seq C,$ then there exists a quantum error-correcting code of parameters
$\llbracket 5n,2k-5n,d_L \rrbracket$ over $\FF_q.$
\end{theorem}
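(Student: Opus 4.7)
The plan is to reduce this to a direct application of the CSS construction (Theorem \ref{L-CSS}) applied to the Gray image $\Phi(C)$, using the structural results already established in Section 2. The bridge between the ambient ring $R$ and the field $\FF_q$ is Theorem \ref{T-SO}, which tells us that the Gray map commutes with taking duals in the sense that $\Phi(C^\perp)=\Phi(C)^\perp$, and Lemma \ref{L-isometry}, which guarantees that $\Phi$ preserves distances.

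First, I would take the hypothesis $C^\perp\seq C$ and push it through the Gray map. Since $\Phi$ is a bijection (Lemma \ref{L-gmap}), inclusions of subsets of $R^n$ are preserved, so $\Phi(C^\perp)\seq \Phi(C)$. Combining this with the identity $\Phi(C^\perp)=\Phi(C)^\perp$ from Theorem \ref{T-SO}, I obtain
\[
\Phi(C)^\perp \seq \Phi(C).
\]
Thus $\Phi(C)$ is a dual-containing linear code over $\FF_q$.

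Second, I would read off its parameters. By hypothesis $\Phi(C)$ is an $[5n,k,d_L]$ code over $\FF_q$, where the length $5n$ follows from the definition of $\Phi$, the dimension $k$ is given, and the minimum Hamming distance equals the minimum Lee distance of $C$ by the isometry part of Lemma \ref{L-isometry} (this uses linearity, so that Hamming distance equals minimum Hamming weight).

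Third, I would apply the second, stronger half of Theorem \ref{L-CSS} (CSS construction) with the choice $C_1:=\Phi(C)$ and $C_2:=\Phi(C)^\perp=C_1^\perp$. The inclusion $C_2\seq C_1$ is exactly the statement established in the first step. Since $C_1$ has dimension $k_1=k$ and minimum distance $d_L$, the CSS construction yields a quantum error-correcting code with parameters $\llbracket 5n,\,2k-5n,\,d_L\rrbracket_q$, which is exactly the desired conclusion. There is no real obstacle here; the whole argument is a matter of chaining together Theorem \ref{T-SO}, Lemma \ref{L-isometry}, and Theorem \ref{L-CSS}, with the only subtle point being to verify that $\Phi$ really does transport the dual-containment hypothesis from $R^n$ to $\FF_q^{5n}$ intact.
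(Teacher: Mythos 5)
Your proposal is correct and follows exactly the route the paper intends: the paper states this theorem without a written proof, noting only that it follows by combining Theorem \ref{T-SO} (so that $\Phi(C)^\perp=\Phi(C^\perp)\seq\Phi(C)$), the isometry of $\Phi$, and the second half of the CSS construction (Theorem \ref{L-CSS}). Your write-up simply makes those same steps explicit.
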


Let us look at several concrete examples.

\subsection{Examples}
We provide several examples of constructing quantum error-correcting codes over finite fields $\FF_5,$ $\FF_{13},$  $\FF_{17},$ and $\FF_{29}.$

\begin{ex}
    Let $R=\FF_5+v\FF_5+v^2\FF_5+v^3\FF_5+v^4\FF_5$ and
    $n=5.$ We have  $x^5-1= (x+4)^5$ over $\FF_5.$  Let  $g(x)=\eta_1 g_1(x)+\eta_2 g_2(x)+\eta_3 g_3(x)+\eta_4 g_4(x)+\eta_5 g_5(x)$  with $g_1(x)=(x+4)^2, g_2(x)=g_3(x)=g_4(x)=g_5(x)=x+4,$ and let  $C=\langle g(x) \rangle$  be a cyclic code over $R.$  Then $\Phi(C)$ is a linear code with parameters $[25,19,3].$ Since $x^n-1 \equiv 0 \pmod{g_i(x)g_i^\ast(x)}$ for all $i \in [1,5]_{\ZZ},$  then $C^\perp\seq C$ and $\Phi(C)^\perp\seq \Phi(C).$ Therefore, there exists a quantum error-correcting code of parameters $\llbracket 25,13,3\rrbracket_5.$
\hfill $\diamondsuit$
\end{ex}

\begin{ex}
	Let $R=\FF_5+v\FF_5+v^2\FF_5+v^3\FF_5+v^4\FF_5$ and
	$n=8.$ We have  $x^8-1= (x+1)(x+2)(x+3)(x+4)(x^2+2)(x^2+3)$ over $\FF_5.$  Let  $g(x)=\eta_1 g_1(x)+\eta_2 g_2(x)+\eta_3 g_3(x)+\eta_4 g_4(x)+\eta_5 g_5(x)$  with $g_1(x)=g_2(x)=x+2, g_3(x)=x+3, g_4(x)=x^2+2, g_5(x)=x^2+3,$ and let  $C=\langle g(x) \rangle$  be a cyclic code over $R$.  Then $\Phi(C)$ is a linear code with parameters $[40,33,4].$ Since $x^n-1 \equiv 0 \pmod{g_i(x)g_i^\ast(x)}$ for all $i \in [1,5]_{\ZZ},$  then $C^\perp\seq C$ and $\Phi(C)^\perp\seq \Phi(C).$ Therefore, there exists a quantum error-correcting code of parameters $\llbracket 40,26,4 \rrbracket_5.$ \hfill $\diamondsuit$
\end{ex}

\begin{ex}
    Let $R=\FF_5+v\FF_5+v^2\FF_5+v^3\FF_5+v^4\FF_5$ and
    $n=10.$ We have  $x^{10}-1= (x+1)^5(x+4)^5$ over $\FF_5.$  Let  $g(x)=\eta_1 g_1(x)+\eta_2 g_2(x)+\eta_3 g_3(x)+\eta_4 g_4(x)+\eta_5 g_5(x)$  with $g_1(x)=g_2(x)=x+1, g_3(x)=g_4(x)=x+4, g_5(x)=(x+4)^2,$ and let  $C=\langle g(x) \rangle$  be a cyclic code over $R.$  Then $\Phi(C)$ is a linear code with parameters $[50,44,3].$ Since $x^n-1 \equiv 0 \pmod{g_i(x)g_i^\ast(x)}$ for all $i \in [1,5]_{\ZZ},$  then $C^\perp\seq C$ and $\Phi(C)^\perp\seq \Phi(C).$ Therefore, there exists a quantum error-correcting code of parameters $\llbracket 50,38,3 \rrbracket_5.$ \hfill $\diamondsuit$
\end{ex}

\begin{ex}
	Let $R=\FF_{13}+v\FF_{13}+v^2\FF_{13}+v^3\FF_{13}+v^4\FF_{13}$ and $n=3.$ We have  $x^3-1= (x+4)(x+10)(x+12)$ over $\FF_{13}.$  Let  $g(x)=\eta_1 g_1(x)+\eta_2 g_2(x)+\eta_3 g_3(x)+\eta_4 g_4(x)+\eta_5 g_5(x)$  with $g_1(x)=g_2(x)=g_3(x)=x+4$, $g_4(x)=g_5(x)=x+10,$ and let  $C=\langle g(x) \rangle$  be a cyclic code over $R.$  Then $\Phi(C)$ is a linear code with parameters $[15,10,3].$ Since $x^n-1 \equiv 0 \pmod{g_i(x)g_i^\ast(x)}$ for all $i \in [1,5]_{\ZZ},$  then $C^\perp\seq C$ and $\Phi(C)^\perp\seq \Phi(C).$ Therefore, there exists a quantum error-correcting code of parameters $\llbracket 15,5,3 \rrbracket_{13}.$ \hfill $\diamondsuit$
\end{ex}

\begin{ex}
	Let $R=\FF_{13}+v\FF_{13}+v^2\FF_{13}+v^3\FF_{13}+v^4\FF_{13}$ and $n=4.$ We have  $x^4-1= (x+1)(x+5)(x+8)(x+12)$ over $\FF_{13}.$  Let  $g(x)=\eta_1 g_1(x)+\eta_2 g_2(x)+\eta_3 g_3(x)+\eta_4 g_4(x)+\eta_5 g_5(x)$  with $g_1(x)=g_2(x)=g_3(x)=g_4(x)=g_5(x)=x+8,$ and let  $C=\langle g(x) \rangle$  be a cyclic code over $R.$  Then $\Phi(C)$ is a linear code with parameters $[20,15,2].$ Since $x^n-1 \equiv 0 \pmod{g_i(x)g_i^\ast(x)}$ for all $i \in [1,5]_{\ZZ},$  then $C^\perp\seq C$ and $\Phi(C)^\perp\seq \Phi(C).$ Therefore, there exists a quantum error-correcting code of parameters $\llbracket 20,10,2 \rrbracket_{13}.$ \hfill $\diamondsuit$
\end{ex}

\begin{ex}
	Let $R=\FF_{13}+v\FF_{13}+v^2\FF_{13}+v^3\FF_{13}+v^4\FF_{13}$ and $n=6.$ We have  $ x^6-1= (x+1)(x+3)(x+4)(x+9)(x+10)(x+12)$ over $\FF_{13}.$  Let  $g(x)=\eta_1 g_1(x)+\eta_2 g_2(x)+\eta_3 g_3(x)+\eta_4 g_4(x)+\eta_5 g_5(x)$  with $g_1(x)=g_2(x)=x+3, g_3(x)=x+4, g_4(x)=x+9, g_5(x)=x+10,$ and let  $C=\langle g(x) \rangle$  be a cyclic code over $R.$  Then $\Phi(C)$ is a linear code with parameters $[30,25,3]$. Since $x^n-1 \equiv 0 \pmod{g_i(x)g_i^\ast(x)}$ for all $i \in [1,5]_{\ZZ},$  then $C^\perp\seq C$ and $\Phi(C)^\perp\seq \Phi(C).$ Therefore, there exists a quantum error-correcting code of parameters $\llbracket 30,20,3 \rrbracket_{13}.$ \hfill $\diamondsuit$
\end{ex}

\begin{ex}
	Let $R=\FF_{17}+v\FF_{17}+v^2\FF_{17}+v^3\FF_{17}+v^4\FF_{17}$ and $n=4.$ We have  $x^4-1= (x+1)(x+4)(x+13)(x+16)$ over $\FF_{17}.$  Let  $g(x)=\eta_1 g_1(x)+\eta_2 g_2(x)+\eta_3 g_3(x)+\eta_4 g_4(x)+\eta_5 g_5(x)$  with $g_1(x)=x+4, g_2(x)=g_3(x)=g_4(x)=g_5(x)=x+13,$ and let  $C=\langle g(x) \rangle$  be a cyclic code over $R.$  Then $\Phi(C)$ is a linear code with parameters $[20,15,2].$ Since $x^n-1 \equiv 0 \pmod{g_i(x)g_i^\ast(x)}$ for all $i \in [1,5]_{\ZZ},$  then $C^\perp\seq C$ and $\Phi(C)^\perp\seq \Phi(C).$ Therefore, there exists a quantum error-correcting code of parameters $\llbracket 20,10,2 \rrbracket_{17}.$ \hfill $\diamondsuit$
\end{ex}

\begin{ex}
	Let $R=\FF_{17}+v\FF_{17}+v^2\FF_{17}+v^3\FF_{17}+v^4\FF_{17}$ and $n=12.$ We have  $x^{12}-1= (x+1)(x+4)(x+13)(x+16)(x^2+x+1)(x^2+4x+16)(x^2+13x+16)(x^2+16x+1)$ over $\FF_{17}.$  Let  $g(x)=\eta_1 g_1(x)+\eta_2 g_2(x)+\eta_3 g_3(x)+\eta_4 g_4(x)+\eta_5 g_5(x)$  with $g_1(x)=x+4, g_2(x)=x^2+4x+16, g_3(x)=(x+13)(x^2+13x+16), g_4(x)=(x+4)(x^2+16x+1), g_5(x)=(x+4)(x^2+13x+16),$ and let  $C=\langle g(x) \rangle$  be a cyclic code over $R.$  Then $\Phi(C)$ is a linear code with parameters $[60,48,2].$ Since $x^n-1 \equiv 0 \pmod{g_i(x)g_i^\ast(x)}$ for all $i \in [1,5]_{\ZZ},$  then $C^\perp\seq C$ and $\Phi(C)^\perp\seq \Phi(C).$ Therefore, there exists a quantum error-correcting code of parameters $\llbracket 60,36,2 \rrbracket_{17}.$ \hfill $\diamondsuit$
\end{ex}

\begin{ex}
	Let $R=\FF_{29}+v\FF_{29}+v^2\FF_{29}+v^3\FF_{29}+v^4\FF_{29}$ and $n=7.$ We have  $x^7-1= (x+4)(x+5)(x+6)(x+9)(x+13)(x+22)(x+28)$ over $\FF_{29}.$  Let  $g(x)=\eta_1 g_1(x)+\eta_2 g_2(x)+\eta_3 g_3(x)+\eta_4 g_4(x)+\eta_5 g_5(x)$  with $g_1(x)=x+4, g_2(x)=x+5, g_3(x)=x+9, g_4(x)=x+13, g_5(x)=(x+4)(x+5),$ and let  $C=\langle g(x) \rangle$  be a cyclic code over $R.$  Then $\Phi(C)$ is a linear code with parameters $[35,29,5].$ Since $x^n-1 \equiv 0 \pmod{g_i(x)g_i^\ast(x)}$ for all $i \in [1,5]_{\ZZ},$  then $C^\perp\seq C$ and $\Phi(C)^\perp\seq \Phi(C).$ Therefore, there exists a quantum error-correcting code of parameters $\llbracket 35,23,5 \rrbracket_{29}.$ \hfill $\diamondsuit$
\end{ex}

\begin{ex}
	Let $R=\FF_{29}+v\FF_{29}+v^2\FF_{29}+v^3\FF_{29}+v^4\FF_{29}$ and $n=8.$ We have  $x^8-1= (x+1)(x+12)(x+17)(x+28)(x^2+12)(x^2+17)$ over $\FF_{29}.$  Let  $g(x)=\eta_1 g_1(x)+\eta_2 g_2(x)+\eta_3 g_3(x)+\eta_4 g_4(x)+\eta_5 g_5(x)$  with $g_1(x)=g_2(x)=g_3(x)=g_4(x)=x+17, g_5(x)=x^2+17,$ and let  $C=\langle g(x) \rangle$  be a cyclic code over $R.$  Then $\Phi(C)$ is a linear code with parameters $[40,34,3].$ Since $x^n-1 \equiv 0 \pmod{g_i(x)g_i^\ast(x)}$ for all $i \in [1,5]_{\ZZ},$  then $C^\perp\seq C$ and $\Phi(C)^\perp\seq \Phi(C).$ Therefore, there exists a quantum error-correcting code of parameters $\llbracket 40,28,3 \rrbracket_{29}.$  \hfill $\diamondsuit$
\end{ex}

\section{Concluding remarks}
We investigated the structures of cyclic codes over the finite non-chain ring
$\FF_q+v\FF_q+v^2\FF_q+v^3\FF_q+v^4\FF_q$ where $v^5=v.$  Several properties of cyclic codes are derived.  As an application, we construct several quantum error-correcting codes with certain parameters.  It is very interesting to investigate the structures of cyclic codes over the more general non-chain ring as mention above, namely the ring $\FF_q+v\FF_q+v^2\FF_q+v^3\FF_q+\cdots+v^{m-1}\FF_q$ where
$v^m=v$ as well as their application in constructing the quantum codes.

%

\end{document}